\documentclass[letterpaper,10pt,conference]{ieeeconf}

\IEEEoverridecommandlockouts
\usepackage{amsmath}
\usepackage{amsfonts}
\usepackage{amssymb}

\usepackage{amsthm}
\usepackage{graphicx}
\usepackage{algorithm}
\usepackage{algorithmic}
\usepackage[english]{babel}
\usepackage{hyperref}
\usepackage{subcaption}

\newtheorem{theorem}{Theorem}
\newtheorem{lemma}{Lemma}
\newtheorem{proposition}{Proposition}

\theoremstyle{definition}

\theoremstyle{remark}

\title{\LARGE \bf
Minimum-Energy Control for Stochastic Linear Systems
}

\author{
Vaishnavi Sharma$^{1}$ and Vaibhav Katewa$^{2}$
\thanks{The authors are with the
Robert Bosch Centre for Cyber-Physical Systems,
Indian Institute of Science, Bengaluru, India.
{\tt\small (email: $^{1}$svaishnavi@iisc.ac.in,
$^{2}$vkatewa@iisc.ac.in}).}
}

\begin{document}

\maketitle
\thispagestyle{empty}
\pagestyle{empty}
\begin{abstract}
The open-loop minimum-energy control (MEC) problem asks for control inputs that minimize the control energy while ensuring the terminal state reaches a desired value. In this paper, we study the open-loop MEC problem for stochastic linear time-invariant systems. We consider a variety of stochastic constraints that enforce the terminal error to be \emph{small}, and show that Stochastic MEC (SMEC) problem can be formulated as a QCQP. We obtain the optimal solution for SMEC and show that it generalizes the well-known Deterministic MEC problem by introducing novel stochastic variants of controllability and Gramian matrices. Finally, we show that there exists a fundamental lower bound on the optimal control energy which cannot be overcome by increasing the time horizon of the problem. 
\end{abstract}
\section{Introduction}
Minimum-energy control is a classic \textit{open-loop} control problem that involves steering a dynamical system from an initial state to a final desired state within some finite horizon $T$ with minimum control energy (often represented using weighted norm of input $u$). 
For deterministic linear systems, this problem admits a closed-form solution, and has been extensively studied in the literature \cite{brockett1970finite}.\\
Prior works have studied stochastic variants of this problem. For example, in \cite{chen2016optimal}, the problem of steering the state of a linear stochastic system from one Gaussian distribution to another is solved, and closed-loop optimal control is obtained. \cite{cov-steer} solves the problem of steering a stochastic linear system between two desired Gaussian distributions while restricting to the feedback controller to the affine  class.  Further, the authors in \cite{wasserstein} enforce a constraint on the terminal state distribution using Gromov-Wasserstein distance for affine feedback policies. In another work \cite{chen}, a variant of the LQG problem is studied in which penalties on the terminal state are replaced by the specification of the terminal state distribution.
A recent work \cite{reachability} presents a unified framework to find probabilistic reachable sets for stochastic systems by decoupling stochastic and deterministic disturbances. Also, data-driven approaches to find minimum-energy control when accurate dynamics are unavailable have been developed in the deterministic setting \cite{baggio2019data}.\\ All the above works (except \cite{baggio2019data}) focus  on obtaining the optimal \textit{closed-loop controller} with different objectives. In contrast, we focus on obtaining the optimal \textit{open-loop control inputs} in the presence of terminal constraints. To the best of our knowledge, ours is the first paper to address this problem.

Note that for the linear Gaussian systems with affine feedback, closed-loop control inputs are random and can alter both mean and covariance of the state. However, the open-loop control inputs are treated as deterministic, and thus, can only alter the mean. Further, enforcing terminal constraint in stochastic systems is non-trivial since the terminal state is random and the notion of exact reachability is no longer meaningful.

The main contributions of the paper are as follows. First, we show that the Stochastic MEC (SMEC) problem can be formulated as a QCQP for a wide variety of terminal constraints. We then obtain an explicit feasibility condition and explicit solution to the QCQP along with its spectral representation. Second, we introduce the notion of stochastic Gramian and stochastic Controllability matrix for open-loop SMEC problem, which is a counterpart of the standard Gramian for deterministic systems. Using this, we show that the solution structure of Stochastic and Deterministic MEC problems are similar, and in fact, the DMEC problem is a special case of SMEC problem. Third, we show that there exists a fundamental limitation on the minimum achievable cost for the SMEC problem that cannot be overcome by increasing the time horizon.

\noindent \emph{Notation}: 
$A^\top$ denotes the transpose of $A$, and $A^T$ denotes the $T^{\text{th}}$ power of $A$.
We write $A \succ 0$
($A \succeq 0$) when $A$ is positive definite (positive semidefinite).
The largest and smallest eigenvalues of a symmetric matrix $A$ are denoted by
$\lambda_{\max}(A)$ and $\lambda_{\min}(A)$, respectively. $A^\dagger$ denotes the
Moore--Penrose pseudoinverse, $A\otimes B$ the Kronecker product. 
$\lambda_i(A)$ denotes an eigenvalue of $A$. 
Finally, $\mathcal{N}(\cdot,\cdot)$ denotes the Normal distribution.
\section{Problem Formulation}
\label{sec: problem}
We consider a discrete-time, stochastic, linear time-invariant dynamical system given by
\begin{align}
   x(t+1) = Ax(t) + Bu(t) + w(t), \qquad t\geq 0
   \label{eq: system}
\end{align}
with state $x(t) \in \mathbb{R}^n$, input $u(t) \in \mathbb{R}^m$, i.i.d. Gaussian noise $w(t) \sim \mathcal{N}(0, \Sigma)$, $\Sigma \succ 0$, system matrix $A \in \mathbb{R}^{n \times n}$, input matrix $B \in \mathbb{R}^{n \times m}$, and initial state $x_0 \sim \mathcal{N}(0, \Sigma_0)$, $\Sigma_0 \succ 0$. Further, define
\begin{equation}
\begin{aligned}
u_T &\triangleq \begin{bmatrix}
u(T-1)^\top & \cdots & u(0)^\top
\end{bmatrix}^\top && \in \mathbb{R}^{mT}, \\
w_T &\triangleq \begin{bmatrix}
w(T-1)^\top & \cdots & w(0)^\top
\end{bmatrix}^\top &&\in \mathbb{R}^{nT}, \\
C_T &\triangleq \begin{bmatrix}
B & AB & A^2 B & \cdots & A^{T-1} B
\end{bmatrix} && \in \mathbb{R}^{n\times mT}, \\
N_T &\triangleq \begin{bmatrix}
I & A & A^2 & \cdots & A^{T-1} 
\end{bmatrix} &&\in \mathbb{R}^{n \times nT}
\end{aligned}
\end{equation}
as the input sequence, noise sequence, $T$-step controllability matrix, and noise propagation matrix, respectively. Then, using model \eqref{eq: system}, the state at time $T$ is given as
\begin{equation}
    x(T) = A^Tx(0) + C_Tu_T + N_Tw_T.
\end{equation}
Let $x_f$ denote the final deterministic desired state at time $T$. The SMEC problem asks for an open-loop input sequence $u_T \in \mathbb{R}^{mT}$ that steers the system state \emph{close} to $x_f$ in time $T$ and minimizes the cost  
$\sum_{t=0}^{T-1} u(t)^\top Q_0 u(t) = u_T^\top Q_Tu_T$, where $Q_0 \succ 0$ and $Q_T = I_T \otimes Q_0$. To quantify the \emph{closeness} of $x_T$ to $x_f$, we define the error vector $e_T = x_T - x_f$. Since $w(t)$ is Gaussian, the error is also Gaussian with mean and covariance as
\begin{align*}
    \mu_{e_T} &= C_Tu_T-x_f, \text{ }\Sigma_{e_T} = A^T \Sigma_0 (A^T)^\top + N_T \Sigma_{w_T}N_T^\top,
\end{align*}
where $\Sigma_{w_T} \triangleq (I_T \otimes \Sigma)$. Since Rank$(N_T) = n$ and $\Sigma_{w_T} \succ 0$, we have $\Sigma_{e_T} \succ 0$. Hence, $\Sigma_{e_T}$ is invertible. We emphasize that the control input $u_T$ only affects the mean of the error and not the covariance. Next, we discuss various ways to mathematically constrain the \emph{size} of the random error $e_T$.
\subsection{Constraints on Normalized Error}
  We consider the normalized error $\Sigma_{e_T}^{-1/2}e_T$, and enforce it to be \emph{small}. Normalization accounts for the covariance of the error by assigning small weights to error components with large variances and vice versa. Next, we present three approaches to constrain the norm of the normalized error.\\
\emph{1. Chance Constraint:} This captures that the probability of the norm of the normalized error being small is large. Formally, this is specified as
\begin{equation}
\begin{aligned}
\mathbb{P}\{e_T^\top \Sigma_{e_T}^{-1} e_T \leq \epsilon_1^2\}\ge 1- \delta,
\label{eq: original}
\end{aligned}
\end{equation}
where $\epsilon_1>0$ and  $0\leq \delta\leq1$ are known constants. The choice of $(\epsilon_1,\delta)$ depends on the desired risk tolerance of the application.

The quantity $e_T^\top \Sigma_{e_T}^{-1}e_T$ has a non-central Chi-square distribution $(e_T^\top \Sigma_{e_T}^{-1}e_T \sim \chi_n^2(\lambda))$ with degrees of freedom $n$ and non-centrality parameter
$\lambda = \mu_{e_T}^\top \Sigma_{e_T}^{-1}\mu_{e_T}$. Thus, \eqref{eq: original} can be re-written in terms of the CDF of non-central Chi-square distribution $F_{\chi_n^2(\lambda)}$ as
\begin{equation}
    F_{\chi_n^2(\lambda)}(\epsilon_1^2) \geq 1- \delta.
    \label{eq: equi}
\end{equation}
For a given $\epsilon_1$ and $n$, $F_{\chi_n^2(\lambda)}(\epsilon_1^2)$ is a monotonically decreasing function of $\lambda$. Further, let $\underline{\gamma}_{T} \triangleq \underset{u_T}{\min} \:\: \mu_{e_T}^\top \Sigma_{e_T}^{-1}\mu_{e_T}.$ Then, constraint \eqref{eq: equi} is feasible iff $ F_{\chi_n^2(\underline{\gamma}_{T})}(\epsilon_1^2) \geq 1- \delta$. We provide the expression of $\underline{\gamma}_{T}$ later in Lemma \ref{lem: feasibility}. 
Finally, using the same monotonically decreasing property of $F_{\chi_n^2(\lambda)}(\epsilon_1^2)$, \eqref{eq: equi} is equivalent to
\begin{equation}
\begin{aligned}
     \lambda = \mu_{e_T}^\top \Sigma_{e_T}^{-1}\mu_{e_T} 
     \leq \lambda^*, 
     \label{eq: exact}
\end{aligned}     
\end{equation}
where $\lambda^*$ satisfies $F_{\chi_n^2(\lambda^*)}(\epsilon_1^2) = 1- \delta$.
This equation can be solved numerically to obtain the value of $\lambda^{*}$.  \\
From \eqref{eq: equi}, the minimum feasible value of $\delta$ is $1-F_{\chi_n^2(\lambda)}(\epsilon_1^2)$. This value increases with the system dimension $n$. Thus, satisfying the error constraint becomes difficult for larger systems.

\emph{2. Second-Order Moment Constraint:} 
We consider the expected value of the norm of the normalized error as
\begin{align*}
    \mathbb{E}[e_T^\top \Sigma_{e_T}^{-1} e_T] = \mu_{e_T}^\top \Sigma_{e_T}^{-1} \mu_{e_T} +n.
\end{align*}
The first component $\mu_{e_T}^\top \Sigma_{e_T}^{-1} \mu_{e_T}$ depends on $u_T$ and we constrain it as 
\begin{align}
    \mu_{e_T}^\top \Sigma_{e_T}^{-1} \mu_{e_T} & \leq \epsilon_2^2.
    \label{eq: second}
\end{align}
The above condition is feasible when  $\epsilon_2^2 \geq \underline{\gamma}_T$.\\
\emph{3. Mahalanobis Distance Constraint:} The Mahalanobis distance between a point $v$ and a distribution with mean $\mu_x$ and covariance matrix $\Sigma_x$ is given by $\sqrt{(v-\mu_x)^\top \Sigma_x^{-1}(v-\mu_x)}$. Thus, we bound the Mahalanobis distance between the normalized error and $0$ as
\begin{equation}\mu_{e_T}^\top \Sigma_{e_T}^{-1}\mu_{e_T} \leq \epsilon_3^2.
    \label{eq: maha}
    \end{equation}
The above condition is feasible when $\epsilon_3^2 \geq \underline{\gamma}_T$.

Observe that the constraints \eqref{eq: exact}, \eqref{eq: second}, and \eqref{eq: maha} are similar, and using $\mu_{e_T} = C_T u_T-x_f$, we write them in a general form as
\begin{equation}
    (C_Tu_T-x_f)^\top \Sigma_{e_T}^{-1}(C_Tu_T-x_f) \leq \gamma, 
    \label{eq: general}
\end{equation}
where $\gamma \geq\underline{\gamma}_T$. 
Using this, the SMEC problem is specified as the following Quadratically-Constrained Quadratic Program (QCQP)
\begin{equation}
\begin{aligned}
u_T^\star 
\;=\;
\underset{u_T \in \mathbb{R}^{mT}}{\arg \min} \quad
&  u_T^\top Q_T u_T \\
\text{s.t.} \quad & (C_Tu_T-x_f)^\top \Sigma_{e_T}^{-1}(C_Tu_T-x_f) \leq \gamma.
\label{eq: meci}
\end{aligned}
\end{equation}
Since $Q\succ 0$ and $\Sigma_{e_T}\succ0$, the above QCQP is convex.
\subsection{Constraints on Absolute Error}
In this subsection, we consider the absolute error $e_T$ and enforce
it to be \emph{small} in the following ways.\\
\emph{1. Chance Constraint:}
Similar to before, the chance constraint is specified as
\begin{equation}
    \mathbb{P}\{e_T^\top e_T \le \epsilon_4^2\} \ge 1-\delta_1,
    \label{eq: abs_original}
\end{equation}
with $\epsilon_4 > 0$ and $0 \le \delta_1 \le 1$. 
The quantity $e_T^\top e_T$ has a Generalized Chi-square distribution and the constraint \eqref{eq: abs_original} cannot be represented as a tractable constraint on $u_T$. Thus, we enforce a condition that guarantees \eqref{eq: abs_original} in the following two ways. First, using 
$\frac{1}{\lambda_{\text{max}}(\Sigma_{e_T})}e_T^\top e_T \le e_T^\top \Sigma_{e_T}^{-1}e_T \le \frac{1}{\lambda_{\text{min}}(\Sigma_{e_T})}e_T^\top e_T$, we have that the following constraint implies that \eqref{eq: abs_original} holds true
\begin{align*}
    \mathbb{P}\left\{e^\top \Sigma_{e_T}^{-1}e \le \frac{1}{\lambda_{\text{min}}(\Sigma_{e_T})} \epsilon_4^2\right\} \ge 1-\delta,
\end{align*}
which we solve as discussed in the previous section.
Other approaches include approximation using concentration inequalities. For example, the Chebyshev inequality results in the following conservative constraint
\begin{equation}
    \mu_{e_T}^\top \mu_{e_T} \le \left(\epsilon_4 - \sqrt{\frac{tr(\Sigma_{e_T})}{\delta}}\right)^2.
    \label{eq: chebyshev}
\end{equation}
\emph{2. Second-Order Moment Constraint:} We consider the expected value of the norm of the absolute error as
\begin{align*}
    \mathbb{E}[e_T^\top  e_T] = \mu_{e_T}^\top \mu_{e_T} + \text{trace}(\Sigma_{e_T}).
\end{align*}
Only the first component $\mu_{e_T}^\top \mu_{e_T}$ depends on $u_T$, and we constrain it as
\begin{align}
    \mu_{e_T}^\top\mu_{e_T} & \le \epsilon_5^2.
    \label{eq: abs_second}
\end{align}
Observe that absolute error constraints \eqref{eq: chebyshev} and \eqref{eq: abs_second} are similar. We use them to formulate the following QCQP:
\begin{equation}
\begin{aligned}
u^\star_T 
=
\underset{u_T \in \mathbb{R}^{mT}}{\arg \min} \quad
&  u_T^\top Q_T u_T \\
\text{s.t.} \quad & (C_Tu_T-x_f)^\top(C_Tu_T-x_f) \leq \gamma.
\label{eq: abs_meci}
\end{aligned}
\end{equation}
Next, we present the solution to Problem \eqref{eq: meci} and \eqref{eq: abs_meci}.
\section{Solution to Stochastic Minimum-Energy Control}
In this section, we address the optimization Problem \eqref{eq: meci} with the normalized error. Problem \eqref{eq: abs_meci} is similar and all its results can be obtained by replacing $\Sigma_{e,T}=I$. We begin by deriving the feasibility condition for Problem \eqref{eq: meci}.
\begin{lemma}[\emph{Feasibility}]\label{lem: feasibility}
There exists a feasible control sequence $u_T \in \mathbb{R}^{mT}$ for Problem \eqref{eq: meci} if and only if
\begin{equation}
    \gamma \geq \underline{\gamma}_{T} \triangleq 
    x_f^\top \left(\Sigma_{e_T}^{-1}
    - \Sigma_{e_T}^{-1}
C_T\left(C_T^\top\Sigma_{e_T}^{-1}C_T\right)^\dagger
    C_T^\top\Sigma_{e_T}^{-1}\right)x_f.
    \label{eq: min_fes}
\end{equation}
\end{lemma}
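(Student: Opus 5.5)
Feasibility of Problem \eqref{eq: meci} is equivalent to the minimum of the constraint function $f(u_T) \triangleq (C_Tu_T-x_f)^\top \Sigma_{e_T}^{-1}(C_Tu_T-x_f)$ over $u_T\in\mathbb{R}^{mT}$ being at most $\gamma$. So the plan is to show that this unconstrained minimum is attained and equals the right-hand side of \eqref{eq: min_fes}. Then the ``if'' direction follows by taking any minimizer, which is feasible whenever $\gamma\ge\underline{\gamma}_T$. The ``only if'' direction holds because any feasible $u_T$ satisfies $\underline{\gamma}_T\le f(u_T)\le\gamma$.

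To compute the minimum, I will whiten the problem. Set $W\triangleq\Sigma_{e_T}^{-1/2}$, which is well defined since $\Sigma_{e_T}\succ0$, and define $M\triangleq WC_T$ and $y\triangleq Wx_f$. Then $f(u_T)=\|Mu_T-y\|^2$, an ordinary least-squares objective. A minimizer always exists: for instance $u_T^0=M^\dagger y$. The minimum value is $\|(I-MM^\dagger)y\|^2=y^\top(I-MM^\dagger)y$, because $MM^\dagger$ is the orthogonal projector onto $\mathrm{range}(M)$, so $I-MM^\dagger$ is a symmetric idempotent. Equivalently, one can argue via the normal equations $M^\top M u_T=M^\top y$. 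These are always consistent since $\mathrm{range}(M^\top M)=\mathrm{range}(M^\top)$. Their general solution is $(M^\top M)^\dagger M^\top y+(I-(M^\top M)^\dagger M^\top M)z$.

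The remaining step is to rewrite the minimum value in the paper's form. I will use the identity $M^\dagger=(M^\top M)^\dagger M^\top$, which gives $MM^\dagger=M(M^\top M)^\dagger M^\top$. Substituting $M=WC_T$ and $M^\top M=C_T^\top\Sigma_{e_T}^{-1}C_T$ yields
\[
y^\top(I-MM^\dagger)y = x_f^\top\Big(\Sigma_{e_T}^{-1}-\Sigma_{e_T}^{-1}C_T\big(C_T^\top\Sigma_{e_T}^{-1}C_T\big)^\dagger C_T^\top\Sigma_{e_T}^{-1}\Big)x_f ,
\]
which is exactly $\underline{\gamma}_T$ in \eqref{eq: min_fes}.

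The main obstacle is that $C_T^\top\Sigma_{e_T}^{-1}C_T$ is $mT\times mT$ and is generally singular, for example when $mT>n$. So I cannot simply invert it, and I must justify the pseudoinverse identity $M^\dagger=(M^\top M)^\dagger M^\top$ and the projector property of $MM^\dagger$. I will invoke these as standard Moore--Penrose facts. Alternatively, I can verify them directly through the SVD $M=U S V^\top$: both sides equal $V S^\dagger U^\top$, and $MM^\dagger=US S^\dagger U^\top$ is an orthogonal projector. Everything else is routine.
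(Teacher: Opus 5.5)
Your proof is correct and follows the same idea as the paper: the minimum of the constraint function is the $\Sigma_{e_T}^{-1}$-weighted projection residual of $x_f$ onto $\mathrm{Range}(C_T)$. The paper works directly with the oblique projector $P = C_T(C_T^\top\Sigma_{e_T}^{-1}C_T)^\dagger C_T^\top\Sigma_{e_T}^{-1}$, which is exactly your orthogonal projector in whitened coordinates, $P=\Sigma_{e_T}^{1/2}(MM^\dagger)\Sigma_{e_T}^{-1/2}$; your version also makes explicit two points the paper takes for granted, namely that the minimum is attained and that this $P$ (via $M^\dagger=(M^\top M)^\dagger M^\top$) really is the relevant projector.
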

\begin{proof}
Let $g(u)=(C_Tu-x_f)^\top\Sigma_{e_T}^{-1}(C_Tu-x_f)$. The constraint in \eqref{eq: meci} is feasible if and only if
$\gamma\ge \underset{u}{\min}\: g(u)\triangleq \underline{\gamma}_{T}^{\text{(nor)}}$. Note that $\underline{\gamma}_{T}^{\text{(nor)}}$ is the projection error of $x_f$ on Range$(C_T)$ with respect to the $\Sigma_{e_T}^{-1}$ weighted norm. The corresponding projection matrix is given by $ P = C_T(C_T^\top \Sigma_{e_T}^{-1}C_T)^\dagger C_T^\top \Sigma_{e_T}^{-1}.$
Then, we have
\begin{align*}
\underline{\gamma}_{T}^{\text{(nor)}}
&=
(x_f-P x_f)^\top
\Sigma_{e_T}^{-1}
(x_f-Px_f) \\
&=
x_f^\top
(I-P)^\top
\Sigma_{e_T}^{-1}
(I-P)x_f 
=
x_f^\top
\Sigma_{e_T}^{-1}
(I-P)x_f, 
\end{align*}
where the last equality follows from  the symmetryidempotence ($P^2=P$) and self-adjointness ($P^\top\Sigma_{e_T}^{-1}= \Sigma_{e_T}^{-1}P$) of $P$.
Substituting $P$, we get the result. 
\qed
\end{proof}
Note that $\underline{\gamma}_{T}\geq 0$, and $\underline{\gamma}_{T} = 0 \Leftrightarrow x_f \in \text{Range}(C_T)$.\\
Next, we present the solution to the QCQP \eqref{eq: meci}.
\begin{theorem}[\emph{Optimal Solution}]
\label{thm: optimal_cost}
Assume $\gamma \geq \underline{\gamma}_{T}$. Define
\begin{align}
    S_T \triangleq \frac{\Sigma_{e_T}}{\nu_T} + C_TQ_T^{-1}C_T^\top,
\end{align}
and $\nu^*_T > 0$ be the unique solution to the following equation
\begin{align}
f(\nu_T) &\triangleq \frac{1}{\nu_T^{2}}x_f^\top S_T^{-1}\Sigma_{e_T} S_T^{-1} x_f = \gamma.
\label{eq: nu*}
\end{align}
Further, define
\begin{align}
C_{T,s} &\triangleq \big[ Q_T^{-1/2}C_T^\top S_T^{-1}\ \big]^\dagger Q_T^{1/2}, \quad 
W_{T,s} \triangleq 
C_{T,s}Q_T^{-1}C_{T,s}^\top.
\end{align}
Then, the solution to QCQP \eqref{eq: meci} is
\begin{equation}
u_T^* =
\left\{
\begin{aligned}
&0, \quad \text{if } \gamma \geq x_f^\top \Sigma_{e_T}^{-1} x_f,\\[3mm]
&Q_T^{-1}C_{T,s}^{\top} W_{T,s}^\dagger x_f,
\quad \text{if } \underline{\gamma}_{T} \le \gamma < x_f^\top \Sigma_{e_T}^{-1} x_f.\\[3mm]
\end{aligned}
\right.
\label{eq: solution}
\end{equation}
Further, the optimal cost is given by
\begin{align}
   J^*_T &= x_f^\top W_{T,s}^\dagger x_f
    = x_f^\top S_T^{-1}(C_TQ^{-1}_TC_T^\top)S_T^{-1}x_f. 
    \label{eq: optimal_cost}
\end{align}
\end{theorem}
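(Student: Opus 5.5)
We argue through the KKT conditions of the convex QCQP \eqref{eq: meci}. Slater's condition holds whenever $\gamma > \underline{\gamma}_T$; the boundary case $\gamma=\underline{\gamma}_T$ we treat separately as a limit. In that regime the KKT conditions are necessary and sufficient.

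\emph{Trivial case.} If $\gamma \ge x_f^\top\Sigma_{e_T}^{-1}x_f$, then $u_T=0$ is feasible. It also attains the unconstrained minimum of $u_T^\top Q_T u_T$, so it is optimal.

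\emph{Active constraint.} Otherwise $u_T=0$ is infeasible, so the constraint must be active at the optimum. Write the Lagrangian $u^\top Q_T u + \nu\big[(C_Tu-x_f)^\top\Sigma_{e_T}^{-1}(C_Tu-x_f)-\gamma\big]$ with multiplier $\nu>0$. Stationarity gives
\begin{equation*}
u = \nu\,(Q_T+\nu C_T^\top\Sigma_{e_T}^{-1}C_T)^{-1}C_T^\top\Sigma_{e_T}^{-1}x_f .
\end{equation*}
Applying the push-through/Woodbury identity rewrites this as $u = Q_T^{-1}C_T^\top S_T^{-1}x_f$, with $S_T=\Sigma_{e_T}/\nu + C_TQ_T^{-1}C_T^\top$. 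The residual then becomes
\begin{equation*}
C_Tu-x_f = (C_TQ_T^{-1}C_T^\top - S_T)S_T^{-1}x_f = -\tfrac{1}{\nu}\Sigma_{e_T}S_T^{-1}x_f .
\end{equation*}
Substituting into the active constraint yields exactly $f(\nu)=\gamma$, which is \eqref{eq: nu*}.

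\emph{Existence and uniqueness of $\nu^*_T$.} We show that $f$ is continuous and strictly decreasing on $(0,\infty)$. Indeed, $\frac{1}{\nu}S_T^{-1} = (\Sigma_{e_T}+\nu C_TQ_T^{-1}C_T^\top)^{-1}$, so
\begin{equation*}
f(\nu)=x_f^\top M_\nu^{-1}\Sigma_{e_T}M_\nu^{-1}x_f, \qquad M_\nu=\Sigma_{e_T}+\nu G,\quad G=C_TQ_T^{-1}C_T^\top .
\end{equation*}
Differentiating with $dM_\nu^{-1}/d\nu=-M_\nu^{-1}GM_\nu^{-1}$ gives
\begin{equation*}
f'(\nu)=-2\,x_f^\top M_\nu^{-1}GM_\nu^{-1}\Sigma_{e_T}M_\nu^{-1}x_f .
\end{equation*}
Showing $f'\le 0$ is the main technical step. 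We whiten by $\Sigma_{e_T}^{-1/2}$, so that $\Sigma_{e_T}\to I$ and $G\to\tilde G\succeq0$. Then $M_\nu^{-1}$ commutes with $\tilde G$, the derivative becomes $-2\,y^\top \tilde G(I+\nu\tilde G)^{-3}y\le 0$, and it is strictly negative unless $\tilde G y=0$. The endpoint values follow from the same whitened form. As $\nu\to0$ we get $f\to x_f^\top\Sigma_{e_T}^{-1}x_f$. As $\nu\to\infty$, $f$ tends to the energy of the component of $y$ in $\ker\tilde G$, which is precisely $\underline{\gamma}_T$ from Lemma~\ref{lem: feasibility} (the weighted projection residual onto Range$(C_T)$). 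The intermediate value theorem then gives a unique $\nu^*_T$ for $\underline{\gamma}_T<\gamma<x_f^\top\Sigma_{e_T}^{-1}x_f$. The case $\gamma=\underline{\gamma}_T$ is handled as a limit, or by a least-norm argument.

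\emph{Matching the stated form and the cost.} Set $K = Q_T^{-1/2}C_T^\top S_T^{-1}$, so the optimum is $u^*=Q_T^{-1/2}Kx_f$. By definition $C_{T,s}=K^\dagger Q_T^{1/2}$, hence $W_{T,s}=K^\dagger K^{\dagger\top}=(K^\top K)^\dagger$, using the identity $(K^\top K)^\dagger = K^\dagger K^{\dagger\top}$. Therefore
\begin{equation*}
Q_T^{-1}C_{T,s}^\top W_{T,s}^\dagger x_f = Q_T^{-1/2}K^{\dagger\top}K^\top K x_f = Q_T^{-1/2}Kx_f ,
\end{equation*}
where the last step uses $K^{\dagger\top}K^\top = (KK^\dagger)^\top = KK^\dagger$ and $KK^\dagger K=K$. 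This is the stated $u_T^*$. Finally,
\begin{equation*}
J_T^* = x_f^\top K^\top K x_f = x_f^\top S_T^{-1}C_TQ_T^{-1}C_T^\top S_T^{-1}x_f = x_f^\top W_{T,s}^\dagger x_f ,
\end{equation*}
which is \eqref{eq: optimal_cost}.
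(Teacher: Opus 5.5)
Your proposal is correct and follows the paper's route: KKT with the active constraint, Woodbury to reach $u=Q_T^{-1}C_T^\top S_T^{-1}x_f$, the residual $-\frac{1}{\nu}\Sigma_{e_T}S_T^{-1}x_f$ giving $f(\nu)=\gamma$, and $(K^\top K)^\dagger=K^\dagger K^{\dagger\top}$ for the $C_{T,s}$, $W_{T,s}$ form and the cost. It is more careful than the paper in three ways: your derivative $-2y^\top\tilde G(I+\nu\tilde G)^{-3}y$ is correct, whereas the paper's $-\frac{2}{\nu^3}x_f^\top S_T^{-1}\Sigma_{e_T}S_T^{-1}x_f$ ignores that $S_T$ depends on $\nu$ (for instance it is nonzero when $C_T=0$, where $f$ is constant), and you also establish existence of $\nu_T^*$ and invoke Slater's condition, which the paper's proof does not.

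You were right to set $\gamma=\underline{\gamma}_T$ aside, but that case cannot be recovered in the form the statement asserts. If $\underline{\gamma}_T<x_f^\top\Sigma_{e_T}^{-1}x_f$, then $f(\nu)>\underline{\gamma}_T$ for every finite $\nu$, and no KKT multiplier exists because the constraint gradient vanishes on the feasible set. So there the formula holds only as the $\nu\to\infty$ limit; this is a defect of the statement that the paper's proof also misses.
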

\begin{proof}
If $x_f^\top \Sigma_{e_T}^{-1} x_f \geq \gamma$, then \(u_T=0\) is feasible for \eqref{eq: meci} and yields cost \(u_T^\top Q_T u_T = 0\). Since the cost is nonnegative, it follows that \(u_T^* = 0\).
Next, we consider the case $\underline{\gamma}_{T} \leq \gamma < x_f^\top \Sigma_{e_T}^{-1}x_f$ and use KKT conditions to solve \eqref{eq: meci}. Consider the Lagrangian
\begin{align}
  \label{eq:lagrangian}
    \mathcal{L}(u_T, \nu_T) &= u_T^\top Q_Tu_T \\ &+ \nu_T [(C_Tu_T-x_f)^\top \Sigma_{e_T}^{-1}(C_Tu_T-x_f)-\gamma], \nonumber
\end{align}
where $\nu_T \geq 0$ is the Lagrange multiplier. Further, let $u_T^*$ be the optimal solution for QCQP \eqref{eq: meci} and $\nu_T^*$ be the corresponding lagrange multiplier in \eqref{eq:lagrangian}.

Clearly, $u_T^* = 0$ is infeasible for $\underline{\gamma}_{T} \leq \gamma < x_f^\top \Sigma_{e_T}^{-1}x_f$ which implies that $\nu^* \neq 0$, and hence $\nu_T^* > 0$. Hence, from complementary slackness, the constraint must be active, i.e.
\begin{equation}
     (C_Tu_T^*-x_f)^\top\Sigma_{e_T}^{-1}(C_Tu_T^*-x_f) = \gamma.
     \label{eq: short}
\end{equation}
Also, using stationarity KKT condition i.e. $\nabla _{u_T} \mathcal{L}(u_T,\nu_T) = 0$, we get
\begin{align}
\label{eq: optimal_u}
u_T^* &= (Q_T + \nu_T^*C_T^\top \Sigma_{e_T}^{-1}C_T)^{-1} \nu_T^* C_T ^\top \Sigma_{e_T}^{-1}x_f. \\
& \overset{(a)}{=} Q_T^{-1}C_T^\top \big(\frac{\Sigma_{e_T}}{\nu_T^*} + C_T Q_T^{-1}C_T^\top \big)^{-1}x_f
\!\overset{(b)}{=} Q_T^{-1}C_{T,s}^\top W_{T,s}^\dagger x_f, \nonumber
\end{align}
where $(a)$ follows from Woodbury identity and $(b)$ follows from the identity i.e. $B^\top (BB^\top)^\dagger = B^\dagger$. This proves \eqref{eq: solution}. Note that $u_T^*$ in \eqref{eq: optimal_u} is zero in two cases: either $x_f = 0$ or $x_f \in ker(W_{T,s}^\dagger)= ker(C_{T,s}^\top)$, i.e. if $x_f$ is orthogonal to $Range(C_{T,s})$. 
Next, to solve for  $\nu_T^* > 0$,
we substitute the expression for $u^*$ in \eqref{eq: short} and obtain
\begin{align}
C_T u_T^* -x_f
&= (C_TQ_T^{-1}C_T^\top) \big(\frac{\Sigma_{e_T}}{\nu_T^*} + C_TQ_T^{-1}C_T^\top \big)^{-1}x_f - x_f \nonumber \\
&= -\frac{\Sigma_{e_T}}{\nu_T^*}\underbrace{\big(\frac{\Sigma_{e_T}}{\nu_T^*} + C_TQ_T^{-1}C_T^\top \big)^{-1}}_{S_T(\nu_T^*)^{-1}}x_f 
\label{eq: woodbury}
\end{align}
Substituting \eqref{eq: woodbury} in \eqref{eq: short}, we obtain \eqref{eq: nu*}.
Further, to show that $\nu_T^*$ is unique, note that since $S_T\succ0$ and $\Sigma_{e_T}\succ0$, we have $x_f^\top S_T^{-1}\Sigma_{e_T}S_T^{-1}x_f>0$
for $x_f\neq0$. Hence, $f(\nu_T)$ is well-defined for $\nu_T>0$, and since $\frac{d f}{d\nu_T}
=
-\frac{2}{\nu_T^3}
x_f^\top S_T^{-1}\Sigma_{e_T}S_T^{-1}x_f
<0,$
$f(\nu_T)$ is strictly decreasing for $\nu_T > 0$. Hence the equation $f(\nu_T) = \gamma$ admits a unique solution $\nu_T^*$.
Finally, using \eqref{eq: optimal_u}, the closed form expression for optimal cost is
\begin{align}
u_T^{*\top}Q_Tu_T^*
&= (Q_T^{-1}C_{T,s}^\top W_{T,s}^\dagger x_f)^\top
Q_T
(Q_T^{-1}C_{T,s}^\top W_{T,s}^\dagger x_f) \nonumber\\
&=x_f^\top W_{T,s}^\dagger
(C_{T,s}Q_T^{-1}C_{T,s}^\top)
W_{T,s}^\dagger x_f \nonumber\\
&=x_f^\top W_{T,s}^\dagger W_{T,s}W_{T,s}^\dagger x_f
=x_f^\top W_{T,s}^\dagger x_f \nonumber\\
&\overset{(a)}{=}x_f^\top S_T^{-1}C_TQ_T^{-1}C_T^\top S_T^{-1}x_f,
\label{eq: simple_cost}
\end{align}
where $(a)$ follows from $
\left[D^\dagger(D^\dagger)^\top\right]^\dagger
= D^\top D,$
with $D=Q_T^{-1/2}C_T^\top S_T^{-1}.$
This proves \eqref{eq: optimal_cost}. \qed
\end{proof}
The solution to the SMEC problem requires solving \eqref{eq: nu*} and understanding the behavior of the function $f(\nu_T)$. We have $f(0) = x_f^\top \Sigma_{e_T}^{-1}x_f$. Further, $f(\nu_T)$ is a monotonically decreasing function (see the proof of Theorem \ref{thm: optimal_cost}). Next, we derive the asymptotic value of $f(\nu_T)$ as $\nu_T \rightarrow \infty$. To this aim, define $M_T \triangleq \Sigma_{e_T}^{-1/2}C_TQ_T^{-1}C_T^\top\Sigma_{e_T}^{-1/2}$. Then, $f(\nu_T)$ simplifies to
\begin{align}
f(\nu_T)
&=
x_f^\top\Sigma_{e_T}^{-1/2}
\left(I+\nu_TM_T\right)^{-2}
\Sigma_{e_T}^{-1/2}x_f 
\label{eq: f_simplified}
\end{align}
Using the fact $\underset{\nu_T \to \infty} {\lim}(I+\nu_TM_T)^{-2} = P_{\text{Null}(M_T)}$ where $P_{\text{Null}(M_T)}$ is the orthogonal projector onto $\text{Null}(M_T)$, we get
\begin{align*}
\underset{\nu_T \to \infty} {\lim} f(\nu_T) = x_f^\top\Sigma_{e_T}^{-1/2}
P_{\text{Null}(M_T)}
\Sigma_{e_T}^{-1/2}x_f  
\overset{(a)}{=} \underline{\gamma}_{T}
\end{align*}
where, defining $Y=C_T^\top\Sigma_{e_T}^{-1/2}$, step $(a)$ follows from $\text{Null}(M_T) = \text{Null}(Y^\top Q_T^{-1}Y) = \text{Null}(Y),$
and thus, $P_{\text{Null}(M_T)} = I - Y^\dagger Y= I- Y^\top (Y Y^\top)^\dagger Y$.

Fig. \ref{fig: f} shows the plot of $f(\nu_T)$. It follows that for $\gamma = x_f^\top \Sigma_{e_T}^{-1} x_f$, we have $\nu_T^* = 0$, and as $\gamma\rightarrow \underline{\gamma}_{T}$, we get $\nu_T^* \rightarrow \infty$. We can solve \eqref{eq: nu*} using standard numerical algorithms, for instance, the bijection method.
\begin{figure}[H]
    \centering
\includegraphics[width=\linewidth]{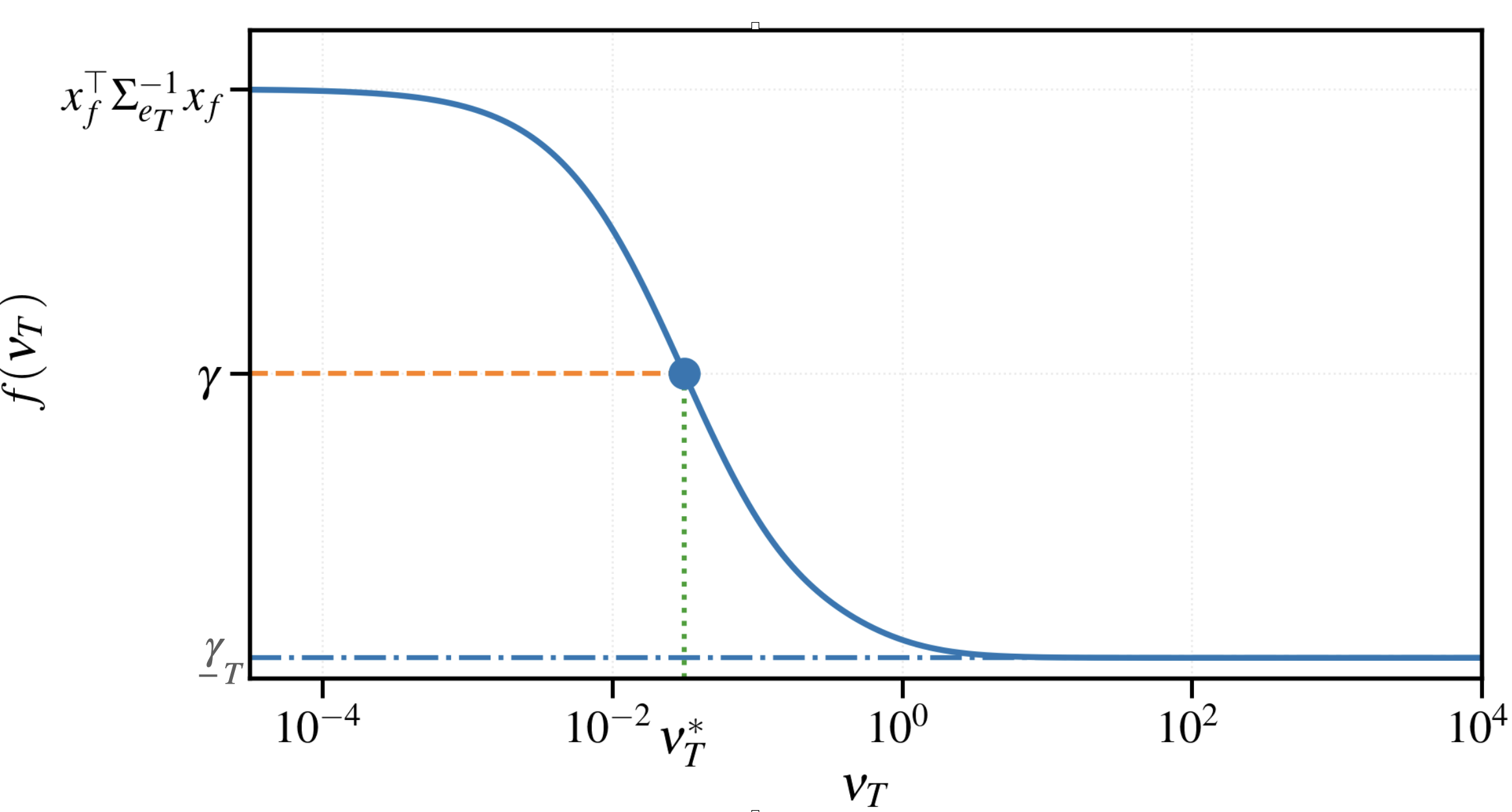}
    \caption{$f(\nu_T)$ vs $\nu_T$. Here $n= 5,$ $m=2,$ and $T=2$ with x-axis on log-scale.}
    \label{fig: f}
\end{figure}
 
Next, we simplify $f(\nu_T)$ further to explicitly reveal its dependence on the spectrum of $M_T$. To this aim, we define the eigen-decomposition of $M_T$ as $ M_T \!= V_T\Lambda_T V_T^\top$ with $\Lambda_T = \text{diag}(\lambda_1, \lambda_2, \cdots,\lambda_n).$
Substituting this in \eqref{eq: f_simplified}, we obtain $f(\nu_T) =\sum_{i=1}^{n}\frac{{\tilde{x}_{f_i}}^2}{(1+\nu_T \lambda_i)^2},$
where $\tilde x_f =  [{\tilde{x}_{f_1}}, \cdots, {\tilde{x}_{f_n}}]^\top \triangleq -V_T^\top \Sigma_{e_T}^{-1/2} x_f.$ This shows that constraint \eqref{eq: nu*} is equivalent to a polynomial equation of degree $2n$.
\subsection{Fundamental Connections between Stochastic and Deterministic MEC Problems}
Interestingly, the optimal control input and optimal cost for the SMEC problem have a similar structure to that of the Deterministic MEC (DMEC) problem, as explained next.\\
For $x_0=0$, the DMEC problem is formulated as
\begin{align*}
u_T^\star
\;=\;
&\underset{u_T\in\mathbb{R}^{mT}}{\arg\min}\quad
 u_T^\top Q_T u_T 
\\ & \qquad\text{s.t. } \:\:\:
C_Tu_T = P_{\operatorname{Range}(T)}x_f,
\label{eq: det_mec}
\end{align*}
where $P_{\operatorname{Range}(T)} = C_T C_T^{\dagger}$ is the orthogonal projector onto $\text{Range}(C_T)$. The above problem is always feasible. Its optimal control  input and optimal cost are given by
\begin{align}
u_T^\star
&=
Q_T^{-1}C_T^\top W_T^\dagger P_{\operatorname{Range}(T)} x_f  \overset{(a)}{=} Q_T^{-1}C_T^\top W_T^\dagger x_f, \nonumber \\
J_T^{\star} &= u_T^{\star\top}Q_Tu_T^\star
x_f^\top W_T^\dagger x_f,
\end{align}
where  $W_T = C_TQ_T^{-1}C_T^\top$ is the Gramian, and $(a)$ follows from $\text{Range}(C_T) = \text{Range}(W_T)$ which implies that $P_{\operatorname{Range}(T)} = W_T W_T^{\dagger}$. Crucially, the above solution is similar in structure to \eqref{eq: solution} and \eqref{eq: optimal_cost} with $W_{T,s}$ and $C_{T,s}$ as stochastic equivalents of $W_T$ and $C_T$, respectively.

Next, we show that SMEC problem reduces to DMEC problem under certain conditions.
\begin{enumerate}
    \item When $\gamma \to \underline{\gamma}_{T}$, we have $\nu_T^* \rightarrow \infty$ (c.f. discussion above Fig. \ref{fig: f}). If the system is controllable\footnote{However, if $\text{Rank}(C_T) < n$, then $\underset{\nu_T^* \rightarrow \infty}{\lim} C_{T,s} = C_T$, only if $U_r^\top \Sigma_{e_T}U_0 = 0$ where $U_r$ and $U_0$ are the orthonormal bases of $\text{Range}(C_T)$ and $\text{Null}(C_T^\top)$, respectively.} ($\text{Rank}(C_T) = n$), then $\underset{\nu_T^* \rightarrow \infty}{\lim} C_{T,s} = C_T$, and the solution to the SMEC problem coincides with the DMEC problem.
This also holds true for problem \eqref{eq: abs_meci} with the constraint on the absolute error.

\item When  $\Sigma_{e_T} \to \textbf{0}$ and $x_f\in \text{Range}(C_T)$, we get 
$\underset{\Sigma_{e_T} \to \textbf{0}}{\lim} C_{T,s} = C_T$, and the solutions coincide. Finally, if $x_f \notin \text{Range}(C_T)$ and $\Sigma_{e_T} \to \textbf{0}$, then the SMEC problem becomes infeasible. 
\end{enumerate}
This analysis shows that the SMEC problem is a generalization of the DMEC problem, and it reduces to DMEC problem under above conditions. 
\subsection{Fundamental Limit on Optimal Cost}
In this subsection, we analyze the optimal cost as a function of time horizon $T$ and show that it asymptotically converges. This provides a fundamental lower bound beyond which the optimal cost cannot be reduced by increasing the horizon $T$.
We begin with a result that provides a condition for the decrease of $J_T^*$.
\begin{proposition}[\emph{Decrease of $J_T^*$ with $T$}]\label{prop: monotonicity} Let $J_T^* = u_T^{*\top} Q_T u_T^*$ be the optimal cost of Problem \eqref{eq: meci}. Then, if $A\Sigma_{0}A^\top + \Sigma \succeq \Sigma_{0}$, we have $J_{T+1}^* \leq  J_{T}^*$ for all $T\geq 1$.
\end{proposition}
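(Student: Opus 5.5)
The approach is a feasibility (embedding) argument. I will show that an optimal input for horizon $T$, padded with a zero input, is feasible for horizon $T+1$ at the same cost. Then $J_{T+1}^*\le J_T^*$ follows because $J_{T+1}^*$ is a minimum over a set that contains this padded input. Throughout I take $\gamma$ to be the same fixed tolerance for both horizons. I also assume Problem \eqref{eq: meci} is feasible at horizon $T$, i.e., $\gamma\ge\underline{\gamma}_T$ by Lemma \ref{lem: feasibility}, so that $J_T^*$ is defined.

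\textbf{Step 1: embedding.} Let $u_T^*$ be optimal for horizon $T$; it exists by Theorem \ref{thm: optimal_cost}. Because of the ordering $u_{T+1}=[u(T)^\top,\dots,u(0)^\top]^\top$, we have $C_{T+1}=[\,C_T\ \ A^TB\,]$. Define $\tilde u_{T+1}\triangleq[\,u_T^{*\top}\ \ 0^\top]^\top$. This shifts the optimal $T$-horizon input one step later and applies zero input at time $0$. Then $C_{T+1}\tilde u_{T+1}=C_Tu_T^*$, so the error mean $\mu_{e_{T+1}}=\mu_{e_T}$ is unchanged. Since $Q_{T+1}=I_{T+1}\otimes Q_0$, the cost is also unchanged: $\tilde u_{T+1}^\top Q_{T+1}\tilde u_{T+1}=u_T^{*\top}Q_Tu_T^*=J_T^*$.

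\textbf{Step 2: covariance monotonicity.} This is the only place the hypothesis enters, and I expect it to be the main (though short) step. Write $\Sigma_{e_T}=A^T\Sigma_0(A^T)^\top+\sum_{k=0}^{T-1}A^k\Sigma(A^k)^\top$. Then
\begin{align*}
\Sigma_{e_{T+1}}-\Sigma_{e_T}
&=A^{T+1}\Sigma_0(A^{T+1})^\top+A^T\Sigma(A^T)^\top-A^T\Sigma_0(A^T)^\top\\
&=A^T\big(A\Sigma_0A^\top+\Sigma-\Sigma_0\big)(A^T)^\top\succeq 0,
\end{align*}
using $A\Sigma_0A^\top+\Sigma\succeq\Sigma_0$ and the fact that congruence preserves semidefiniteness. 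Both covariances are positive definite, and matrix inversion is operator-antitone on positive definite matrices. Hence $\Sigma_{e_{T+1}}^{-1}\preceq\Sigma_{e_T}^{-1}$.

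\textbf{Step 3: conclude.} Combining Steps 1 and 2,
\[
(C_{T+1}\tilde u_{T+1}-x_f)^\top\Sigma_{e_{T+1}}^{-1}(C_{T+1}\tilde u_{T+1}-x_f)\le(C_Tu_T^*-x_f)^\top\Sigma_{e_T}^{-1}(C_Tu_T^*-x_f)\le\gamma .
\]
So $\tilde u_{T+1}$ is feasible for Problem \eqref{eq: meci} at horizon $T+1$. This also shows $\underline{\gamma}_{T+1}\le\underline{\gamma}_T\le\gamma$, so the $T+1$ problem is feasible and $J_{T+1}^*$ is defined. By optimality, $J_{T+1}^*\le\tilde u_{T+1}^\top Q_{T+1}\tilde u_{T+1}=J_T^*$.

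The trivial case $u_T^*=0$ (when $\gamma\ge x_f^\top\Sigma_{e_T}^{-1}x_f$) is covered by the same argument. Alternatively, note directly that $x_f^\top\Sigma_{e_{T+1}}^{-1}x_f\le x_f^\top\Sigma_{e_T}^{-1}x_f\le\gamma$, so $J_{T+1}^*=0$.
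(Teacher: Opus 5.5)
Your proposal is correct and follows the paper's own argument: pad $u_T^*$ with a zero block (a one-step delay), note that $\Sigma_{e_{T+1}}-\Sigma_{e_T}=A^T(A\Sigma_0A^\top+\Sigma-\Sigma_0)(A^T)^\top\succeq 0$, and conclude that the padded input is feasible at horizon $T+1$ with unchanged cost. Your write-up is slightly tighter than the paper's because you state the step $\Sigma_{e_{T+1}}^{-1}\preceq\Sigma_{e_T}^{-1}$ explicitly. You also use only the one-directional implication that is needed, whereas the paper claims an equivalence that would require $A$ to be invertible.
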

\begin{proof}
    Consider QCQP \eqref{eq: meci} at time $T+1$ with input $u_{T+1} = \begin{bmatrix}
    u_T^{*\top} & 0
    \end{bmatrix}^\top \in \mathbb{R}^{m(T+1)}$. Then, we have 
    \begin{align*}
         &(C_{T+1}u_{T+1}-x_f)^\top \Sigma_{e_{T+1}}^{-1}(C_{T+1}u_{T+1}-x_f) \\ 
         &= (C_Tu_T^* - x_f)^\top (A\Sigma_{e_T}A^\top + \Sigma)^{-1}(C_Tu_T^*-x_f)
    \end{align*}
Hence, if $A\Sigma_{e_T}A^\top + \Sigma \succeq \Sigma_{e_T}$ i.e. $A^T(A \Sigma_0 A^\top + \Sigma -\Sigma_0)(A^T)^\top \succeq 0 \Leftrightarrow A \Sigma_0 A^\top + \Sigma \succeq \Sigma_0$, $u_{T+1}$ is feasible at time $T+1$ which implies
$J(u_{T+1}^*) \leq J(u_{T+1}) = u_{T+1}^\top Q_{T+1} u_{T+1} = u_{T}^{*\top} Q_{T} u_{T}^* 
= J(u_T^*)$.
This completes the proof.
\qed
\end{proof}
Next, we establish the asymptotic convergence of the optimal cost.
\begin{theorem}[\emph{Convergence of Optimal Cost}]
\label{thm:horizon_limit}
Let $P$ be an invertible matrix such that
\[
P^{-1}AP=
\text{diag}(A_n, A_s), \:\:
|\lambda_i(A_n)|\geq1, \:\:
|\lambda_i(A_s)|<1.
\]
Define 
\[
P^{-1}B=
\begin{bmatrix}
B_n\\
B_s
\end{bmatrix},
P^{-1}x_f=
\begin{bmatrix}
x_f^n\\
x_f^s
\end{bmatrix},
P^{-1}\Sigma P^{-\top}
=
\begin{bmatrix}
\Sigma_{nn} & \Sigma_{ns}\\
\Sigma_{sn} & \Sigma_{ss}
\end{bmatrix},
\]

\begin{align*}
W_c^s
&\triangleq
\sum_{k=0}^{\infty}
A_s^kB_sQ_0^{-1}B_s^\top(A_s^k)^\top, \quad
W_w^s
\triangleq
\sum_{k=0}^{\infty}
A_s^k\Sigma_{ss}(A_s^k)^\top,
\end{align*}
and $\gamma_\infty
\triangleq
(x_f^s)^\top(W_w^s)^{-1}x_f^s.$ Further, for $0<\gamma<\gamma_\infty$, let $\nu_\infty>0$ be the unique
solution of
\begin{equation}
f_\infty(\nu) = (x_f^s)^\top
(W_w^s+\nu W_c^s)^{-1}
W_w^s
(W_w^s+\nu W_c^s)^{-1}
x_f^s
=
\gamma,
\label{eq:nu_limit}
\end{equation}
and let $S_\infty^s
\triangleq
W_c^s+\frac{W_w^s}{\nu_\infty}.$
Then, we have
\begin{equation}
\lim_{T\to\infty}J_T^*
=
\begin{cases}
(x_f^s)^\top
(S_\infty^s)^{-1}
W_c^s
(S_\infty^s)^{-1}
x_f^s, \:\text{ if }
0<\gamma<\gamma_\infty,\\[2mm]
0,
\quad \qquad \qquad \qquad \qquad \quad \:\: \quad \text{ if }\gamma>\gamma_\infty \nonumber.
\end{cases}
\label{eq:J_limit}
\end{equation}
\end{theorem}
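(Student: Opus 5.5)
We take the limit $T\to\infty$ in the closed form of Theorem \ref{thm: optimal_cost} after moving to the modal coordinates defined by $P$. The cost, the constraint and $f$ are invariant under the change of variables $x\mapsto P^{-1}x$, provided we replace $C_T$, $\Sigma_{e_T}$ and $x_f$ by $P^{-1}C_T$, $P^{-1}\Sigma_{e_T}P^{-\top}$ and $P^{-1}x_f$. So we may assume $A=\mathrm{diag}(A_n,A_s)$. Then
\[
W_T \triangleq C_TQ_T^{-1}C_T^\top=\sum_{k=0}^{T-1}A^kBQ_0^{-1}B^\top (A^k)^\top ,
\qquad
\Sigma_{e_T}=A^T\Sigma_0(A^T)^\top+\sum_{k=0}^{T-1}A^k\Sigma (A^k)^\top .
\]
Both matrices split into blocks. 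The $ss$ blocks converge to $W_c^s$ and $W_w^s$, because $\rho(A_s)<1$ makes the series converge and the $A_s^T\Sigma_0(A_s^T)^\top$ term vanish. The $nn$ and $ns$ blocks stay bounded or grow, since $|\lambda_i(A_n)|\ge 1$.

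The central object is $f_T(\nu)=x_f^\top(\Sigma_{e_T}+\nu W_T)^{-1}\Sigma_{e_T}(\Sigma_{e_T}+\nu W_T)^{-1}x_f$, which is the paper's $f$ written as $\nu^{-2}S_T^{-1}\Sigma S_T^{-1}$. Write $G_T=(\Sigma_{e_T}+\nu W_T)^{-1}$ and $J_T^*=\nu^2 x_f^\top G_T W_T G_T x_f$ at $\nu=\nu_T^*$.

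The main step is to show that $G_T$ converges to $\mathrm{diag}(0,(W_w^s+\nu W_c^s)^{-1})$, uniformly for $\nu$ in compact subsets of $(0,\infty)$, and that the relevant quadratic forms converge as well. Concretely, this should give $f_T(\nu)\to f_\infty(\nu)$ and $\nu^2x_f^\top G_TW_TG_Tx_f\to \nu^2 (x_f^s)^\top (W_w^s+\nu W_c^s)^{-1}W_c^s(W_w^s+\nu W_c^s)^{-1}x_f^s$.

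I would prove this with a Schur complement on the block partition. Because $\Sigma\succ0$, the $nn$ block of $\Sigma_{e_T}$ dominates $\sum_{k<T}A_n^k\Sigma_{nn}(A_n^k)^\top$. Since every eigenvalue of $A_n$ has modulus at least $1$, this sum has $\lambda_{\min}\to\infty$. Hence the inverse of the $nn$ block tends to $0$.

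This step is the main obstacle. The off-diagonal blocks $\sum_k A_n^k\Sigma_{ns}(A_s^k)^\top$ may grow, for instance when $|\lambda(A_n)|>1$, and one has to show that the Schur complement correction $X_{sn}X_{nn}^{-1}X_{ns}$ converges to the "correct" limit. My first attempt is to bound the off-diagonal block as $O(\|A_n^T\|)$ times a constant, using $\sum_k\|A_s^k\|<\infty$. I would then compare this with the growth of the $nn$ block, which is at least of order $\|A_n^{T-1}\|^2$ in suitable directions. Unit-modulus eigenvalues with Jordan blocks need polynomial-growth estimates; this case is delicate, and I would treat it separately via a Jordan-form argument.

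Given this convergence, the rest is routine. The limit $f_\infty$ is strictly decreasing, by the same derivative argument as in Theorem \ref{thm: optimal_cost}, with $f_\infty(0)=\gamma_\infty$ and $f_\infty(\nu)\to\underline\gamma_\infty\ge 0$ as $\nu\to\infty$. Take $0<\gamma<\gamma_\infty$ in the range of $f_\infty$. Monotonicity together with locally uniform convergence of $f_T\to f_\infty$ gives $\nu_T^*\to\nu_\infty$. Plugging this into the limit of the cost expression, and rewriting $(W_w^s+\nu W_c^s)^{-1}=\nu^{-1}(S_\infty^s)^{-1}$, yields the stated formula.

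For $\gamma>\gamma_\infty$ we have $x_f^\top\Sigma_{e_T}^{-1}x_f=f_T(0)\to f_\infty(0)=\gamma_\infty<\gamma$. So for all large $T$ the first case of \eqref{eq: solution} applies, giving $u_T^*=0$ and $J_T^*=0$.
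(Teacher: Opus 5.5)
Your modal-coordinate reduction, the formulas $f_T=x_f^\top G_T\Sigma_{e_T}G_Tx_f$ and $J_T^*=\nu^2x_f^\top G_TW_TG_Tx_f$, and your endgame are fine. However, the proof has a genuine gap at exactly the step you call the main obstacle, and every conclusion rests on it, including $f_T(0)\to\gamma_\infty$, which you use for $\gamma>\gamma_\infty$.

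Write $X_T=\Sigma_{e_T}+\nu W_T$ with blocks $X_{nn},X_{ns},X_{ss}$. What you need is $\|X_{nn}^{-1/2}X_{ns}\|\to0$; in the paper's notation this is $\|\mathcal A_T^{-1/2}\mathcal B_T\|\to0$. Comparing $\|X_{ns}\|=O(\|A_n^T\|)$ against the growth of $X_{nn}$ does not deliver this. $\|X_{ns}\|$ is driven by the fastest unstable mode, while $\lambda_{\min}(X_{nn})$ is driven by the slowest. For example, take $A_n=\mathrm{diag}(2,1)$, $A_s=0.9$, and $\Sigma$ coupling the eigenvalue-$2$ mode with the stable mode. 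Then $\|X_{ns}\|\sim 1.8^T$ but $\lambda_{\min}(X_{nn})\sim T$, so $\|X_{ns}\|^2/\lambda_{\min}(X_{nn})\to\infty$, even though the true correction $X_{sn}X_{nn}^{-1}X_{ns}$ tends to $0$. Your phrase ``in suitable directions'' is the entire content of the claim. Making it precise would mean tracking how $X_{nn}$ couples modes of different growth rates, Jordan chains and complex pairs, and you leave that open.

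There is a second, related omission. Convergence of $G_T$ alone does not give convergence of $f_T$ or $J_T^*$, because $(\Sigma_{e_T})_{nn}$ and $(W_T)_{nn}$ blow up. You also need $z_n^\top X_{nn}z_n\to0$ for $z=G_Tx_f$, which you do not address.

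The missing idea, used in Lemma~\ref{lem:block_separation}, avoids growth rates entirely.
\begin{itemize}
\item Split $X_T$ into a head (terms $k<K$) and a tail (terms $K\le k\le T-1$, together with $A^T\Sigma_0(A^T)^\top$).
\item The head's $ns$ block is bounded independently of $T$, so its contribution is killed by $X_{nn}^{-1}\to0$.
\item The tail is positive semidefinite and its $nn$ block is $\preceq X_{nn}$. Cauchy--Schwarz for PSD block matrices then gives $\|X_{nn}^{-1/2}R^{(K)}_T\|^2\le\|Y^{(K)}_T\|$, where $Y^{(K)}_T$ is the tail's $ss$ block.
\item $Y^{(K)}_T$ involves only $A_s$, so it is small uniformly in $T$ once $K$ is large.
\end{itemize}
With $\|X_{nn}^{-1/2}X_{ns}\|\to0$ in hand, you get $z_n^\top X_{nn}z_n=\|X_{nn}^{-1/2}(x_f^n-X_{ns}z_s)\|^2\to0$. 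The cross terms then vanish by the same PSD inequality. The key estimate holds uniformly for $\nu$ in compact subsets of $[0,\infty)$, since $X_{nn}(\nu)\succeq(\Sigma_{e_T})_{nn}$ and $Y^{(K)}_T$ is monotone in $\nu$.

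Once this is supplied, your remaining steps go through. Your treatment of $\gamma>\gamma_\infty$ (showing $u_T^*=0$ eventually) is in fact slightly more careful than the paper's. So is your caveat that $\gamma$ must lie in the range of $f_\infty$: $f_\infty(\nu)\to0$ only when $x_f^s\in\mathrm{Range}(W_c^s)$.
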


\begin{proof}
Define $G_T
\triangleq
C_TQ_T^{-1}C_T^\top,$
$\widetilde\Sigma_0
\triangleq
P^{-1}\Sigma_0P^{-\top},$ $\widetilde G_T
\triangleq P^{-1}G_TP^{-\top},$ $\widetilde{M}_T
\triangleq
P^{-1}\Sigma_{e_T}P^{-\top},$
\begin{align*}
\widetilde S_T(\nu)
&\triangleq
P^{-1}S_T(\nu)P^{-\top}
\triangleq
\begin{bmatrix}
\mathcal A_T(\nu) & \mathcal B_T(\nu)\\
\mathcal B_T(\nu)^\top & \mathcal D_T(\nu)
\end{bmatrix}, \text{and} \nonumber\\
\widetilde z_T
&\triangleq
\begin{bmatrix}
z_{n,T}^\top & z_{s,T}^\top
\end{bmatrix}^{\top}
=
\widetilde S_T(\nu_T)^{-1}
\begin{bmatrix}
(x_f^n)^\top & (x_f^s)^\top
\end{bmatrix}^{\top}
\\
&=
\begin{bmatrix}
\mathcal A_T^{-1}x_f^n
-
\mathcal A_T^{-1}\mathcal B_T
\mathcal H_T^{-1}
\left(
x_f^s-\mathcal B_T^\top\mathcal A_T^{-1}x_f^n
\right)
\\[3mm]
\mathcal H_T^{-1}
\left(
x_f^s-\mathcal B_T^\top\mathcal A_T^{-1}x_f^n
\right)
\end{bmatrix},
\end{align*}
where $\mathcal H_T(\nu)
\triangleq
\mathcal D_T(\nu)
-
\mathcal B_T(\nu)^\top
\mathcal A_T(\nu)^{-1}
\mathcal B_T(\nu).$
We then re-write optimal cost as 
\begin{align}
J_T^*
&=
z_{n,T}^\top G_{nn,T}z_{n,T}
+
2z_{n,T}^\top G_{ns,T}z_{s,T}
+
z_{s,T}^\top G_{ss,T}z_{s,T}.
\label{eq: J_lim}
\end{align}
Using Lemma~\ref{lem:block_separation}, for every fixed $\nu>0$, $\mathcal H_T(\nu)
\to
W_c^s+\frac{1}{\nu}W_w^s,$ $\mathcal B_T(\nu)^\top
\mathcal A_T(\nu)^{-1}x_f^n
\to0.$
Therefore, $z_{s,T}(\nu)
\to
z_{s,\infty}(\nu)
\triangleq
\left(
W_c^s+\frac{1}{\nu}W_w^s
\right)^{-1}x_f^s.$

Since $M_{nn,T}\preceq \nu\mathcal A_T(\nu)$, with
$\mathcal A_T(\nu)^{-1}\to0$,
$\|\mathcal A_T(\nu)^{-1/2}\mathcal B_T(\nu)\|\to0$, and
$z_{s,T}$ is bounded, we obtain
\[
z_{n,T}^\top M_{nn,T}z_{n,T}
\le
\nu\left\|
\mathcal A_T(\nu)^{-1/2}
\bigl(x_f^n-\mathcal B_T(\nu)z_{s,T}\bigr)
\right\|^2
\to0.
\]
Further, since $\widetilde{M}_T\succeq0$, we have $|z_{n,T}^\top M_{ns,T}z_{s,T}|^2
\le
(z_{n,T}^\top M_{nn,T}z_{n,T})
(z_{s,T}^\top M_{ss,T}z_{s,T}),$ and hence
$z_{n,T}^\top M_{ns,T}z_{s,T}\to0.$
Therefore, for every fixed $\nu>0$,
\begin{align*}
f_T(\nu)
&=
\frac{1}{\nu^2}
z_T(\nu)^\top M_Tz_T(\nu)
\nonumber\\
&=
\frac{1}{\nu^2}
\Big(
z_{n,T}^\top M_{nn,T}z_{n,T}
+
2z_{n,T}^\top M_{ns,T}z_{s,T}
\\&+
z_{s,T}^\top M_{ss,T}z_{s,T}
\Big)
\nonumber\\
&\to
\frac{1}{\nu^2}
z_{s,\infty}(\nu)^\top
W_w^s
z_{s,\infty}(\nu)
\nonumber\\
&=
x_f^{s\top}
(W_w^s+\nu W_c^s)^{-1}
W_w^s
(W_w^s+\nu W_c^s)^{-1}
x_f^s 
= f_\infty(\nu),
\label{eq:f_block_expansion}
\end{align*}
where the last equality follows from $z_{s,\infty}(\nu)
=
\nu
(W_w^s+\nu W_c^s)^{-1}x_f^s.$
Hence, $f_\infty(\nu)$ is continuous and strictly decreasing, with $\lim_{\nu\to0}f_\infty(\nu)
=
(x_f^s)^\top(W_w^s)^{-1}x_f^s
=
\gamma_\infty,$ and $\lim_{\nu\to\infty}f_\infty(\nu)=0.$
Therefore, for $0<\gamma<\gamma_\infty$, using monotonicity of
$f_\infty$ and uniqueness of $\nu_T^*$, we obtain
$\nu_T\to\nu_\infty$ and $z_{s,T}\to
(S_\infty^s)^{-1}x_f^s.$

We next consider other terms in optimal cost. Since
$G_{nn,T}\preceq\mathcal A_T(\nu_T),$ we get $z_{n,T}^\top G_{nn,T}z_{n,T}\to0.$
Moreover, since $\widetilde G_T\succeq0$, we have $|z_{n,T}^\top G_{ns,T}z_{s,T}|^2
\le
(z_{n,T}^\top G_{nn,T}z_{n,T})
(z_{s,T}^\top G_{ss,T}z_{s,T}),$
and hence $z_{n,T}^\top G_{ns,T}z_{s,T}\to0.$
Substituting all these limits in \eqref{eq: J_lim}, we get the result.

Finally, let $\gamma>\gamma_\infty$. Then for every fixed $\nu>0$,
$f_\infty(\nu)<\gamma$, so for sufficiently large $T$,
$f_T(\nu)<f_T(\nu_T)=\gamma$. Since $f_T$ is decreasing,
$\nu_T<\nu$, and hence $\nu_T\to0$. Further, $\widetilde S_T(\nu_T)
\succeq\frac{1}{\nu_T}\widetilde M_T
\succeq\frac{1}{\nu_T}\widetilde\Sigma,$
where $\widetilde\Sigma=P^{-1}\Sigma P^{-\top}\succ0$. Thus, $0\le J_T^*
\le
\widetilde x_f^\top\widetilde S_T(\nu_T)^{-1}\widetilde x_f
\le
\nu_T\widetilde x_f^\top
\widetilde\Sigma^{-1}\widetilde x_f
\to0,$
which proves the second case of result.
\qed
\end{proof}
For the absolute-norm formulation, since \(\Sigma_{e_T}=I\), no asymptotic limit of \(\Sigma_{e_T}\) is required and the result follows directly by replacing \(W_w^s\) with \((P^{-1}P^{-\top})_{ss}\).
\section{Numerical Simulations}
In this section, we illustrate our results via numerical simulations.
We consider System \eqref{eq: system} with the following parameters
\begin{equation*}
\resizebox{\columnwidth}{!}{$
A=
\begin{bmatrix}
0.4 & 0.1 & 0.2 & 0.6\\
0.5 & -0.2 & 0.2 & -0.04\\
-0.02 & 0.1 & 0.04 & 0.4\\
0.2 & 0.03 & 0.1 & 0.09
\end{bmatrix},
\quad
B=
\begin{bmatrix}
1.4 & -0.2 & 0.3 & -0.8\\
-2.5 & 0.6 & 0.8 & -0.7\\
2.2 & -1.4 & 0.04 & -0.1\\
1.5 & 1.4 & 0.1 & 0.3
\end{bmatrix}
$},
\label{eq:system_matrices}
\end{equation*}
$\Sigma_0 = \sigma_0^2 I$, $\Sigma = \sigma^2 \hspace{1mm} I$, with $\sigma_0 = 0.2$ and $\sigma = 0.3$. The other parameters for the SMEC problem are $Q_0 = \operatorname{diag}(2,3,4,5)$, $x_f = [4, 5, 6, 7]^\top$, and
 $\epsilon_1 = \epsilon_2 = \epsilon_3 = 2.4$ in constraints \eqref{eq: original}, \eqref{eq: second}, \eqref{eq: maha}.
The minimum feasible value of $\delta$ for our example is $0.3$. For simulations, we consider $\delta = 0.4$.
To solve \eqref{eq: meci} under different constraints discussed in Section \ref{sec: problem}, we first obtain $F_{\chi_n^2(\lambda)}(\epsilon_1^2)$ using \emph{ncx2} function of scipy.stats library and then numerically obtain $\lambda^*=1.62$. Further, we use bisection method with $0$ and $100$ as lower and upper limits, respectively, to obtain optimal Lagrange multiplier $\nu_T^*$. 
Finally, we compute the optimal control sequence and optimal cost using the expressions in Theorem \ref{thm: optimal_cost}. For each optimal control input sequence, we perform $5000$ Monte-Carlo simulations and store the value of terminal states. 
\begin{figure}[!htbp]
\centering
\includegraphics[width=\linewidth]{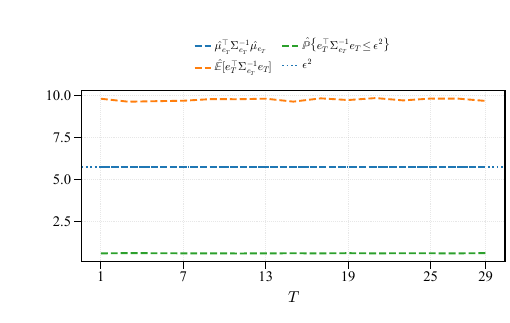}
  \caption{Empirical verification of constraints. Here, $\hat{\mathbb{P}}$ and $\hat{\mathbb{\mu}}_e$ denote the empirical estimates. $\hat{\mathbb{P}}\{e_T^\top \Sigma_{e_T}^{-1}e_T^\top \le \epsilon_1^2\}$, $\hat{\mathbb{E}}[e_T^\top\Sigma_{e_T}^{-1}e_T]$, and $\hat{\mu}_{e_T}^\top \Sigma_{e_T}^{-1}\hat{\mu}_{e_T}$ are close to $1-\delta, \epsilon_2^2 + n, \epsilon_3^2$, respectively.}
  \label{fig: rato}
\end{figure}

\begin{figure}[!t]
\includegraphics[width=\linewidth]{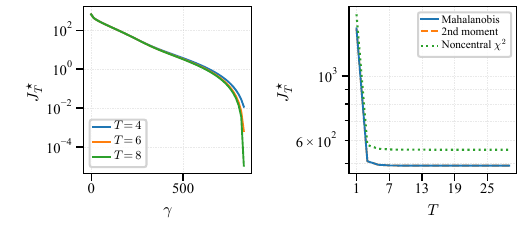}
    \caption{$J^*_T$ as a function of $\gamma$ and $T$.}
    \label{fig:x_gamma}
\end{figure}

\begin{figure}[!htbp]
\includegraphics[width=\linewidth]{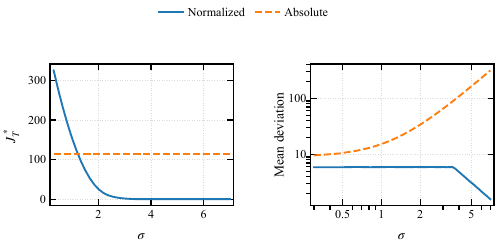}
  \caption{$J_T^*$ and Mean deviation as a function of $\sigma$. For absolute and normalized error norm formulations, we plot $\mu_{e_T}^\top \Sigma_{e_T}^{-1}\mu_{e_T}$ and $\mu_{e_T}^\top\mu_{e_T}+\text{trace}(\Sigma_{e_T}),$ respectively.}
  \label{fig: sigma}
\end{figure}
Fig. \ref{fig: rato} shows empirically that the constraints in \eqref{eq: original}, \eqref{eq: second} and \eqref{eq: maha} are satisfied with equality.

Fig. \ref{fig:x_gamma} shows the variation of the optimal cost with terminal deviation $\gamma$ and $T$. The optimal cost decreases with $\gamma$ since a larger value of $\gamma$ implies a larger feasible set. This indicates a trade-off between the cost and terminal deviation. Further, the optimal cost decreases with time since the condition in Proposition \ref{prop: monotonicity} is satisfied, and it asymptotically converges as per Theorem \ref{thm:horizon_limit}.

Finally, Fig. \ref{fig: sigma} shows the variation of optimal cost and mean deviation with process noise level for both normalized error formulation \eqref{eq: meci} and absolute error formulation \eqref{eq: abs_meci}. For the normalized error formulation, $J_T^*$ decreases with $\sigma$ due to expansion of the feasible set of $u_T$. The mean deviation $\mu_{e_T}^\top \Sigma_{e_T}^{-1}\mu_{e_T}$ remains equal to $\gamma$ until $u_T^*=0$. Subsequently, it decreases since the constraint is no longer active. For the absolute error norm formulation, $J_T^*$ remains constant since it does not depend on $\sigma$. The mean deviation $\mu_{e_T}^\top \mu_{e_T} + \text{trace}(\Sigma_{e_T})$ increases due to the increase of uncontrolled component $\text{trace}(\Sigma_{e_T})$.

\section{Conclusion}
We studied the open-loop MEC problem for stochastic linear systems. The study of this problem led us to propose the novel stochastic controllability matrix and stochastic Gramian. We conjecture that these quantities should play a role in other types of stochastic control problems as well. Future works include investigating this point, generalizing the problem to include state cost as well, and studying the problem in a data-driven setting where the system model is unknown.

\appendix
\begin{lemma}[\emph{Asymptotic Block Separation}]
\label{lem:block_separation}
Under the assumptions and notations of Theorem~\ref{thm:horizon_limit}, 
for every fixed $\nu>0$, as $T\rightarrow \infty$, we have 
\begin{align}
\mathcal A_T(\nu)^{-1} &\to 0, \text{ } \mathcal B_T(\nu)^\top\mathcal A_T(\nu)^{-1}
\mathcal B_T(\nu) \to0, \nonumber \\
\mathcal B_T(\nu)^\top\mathcal A_T(\nu)^{-1}x_f^n
&\to0, \text{ and }
\mathcal D_T(\nu)
\to
W_c^s+\frac{1}{\nu}W_w^s. 
\label{eq: limits}
\end{align}
\end{lemma}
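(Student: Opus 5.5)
The plan is to compute the blocks of $\widetilde S_T(\nu)=\frac{1}{\nu}\widetilde M_T+\widetilde G_T$ explicitly in the modal coordinates, and then treat the stable block, the unstable block and the cross block separately. Since $P^{-1}A^kB=\begin{bmatrix}A_n^kB_n\\ A_s^kB_s\end{bmatrix}$ and $P^{-1}A^kP=\text{diag}(A_n^k,A_s^k)$, we have
\begin{align*}
\widetilde G_T=\sum_{k=0}^{T-1}g_kg_k^\top,\qquad
\widetilde M_T=\widetilde A^T\widetilde\Sigma_0(\widetilde A^T)^\top+\sum_{k=0}^{T-1}\widetilde A^k\widetilde\Sigma(\widetilde A^k)^\top,
\end{align*}
where $g_k=\begin{bmatrix}A_n^kB_n\\ A_s^kB_s\end{bmatrix}Q_0^{-1/2}$ and $\widetilde A=\text{diag}(A_n,A_s)$. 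Every summand is therefore of the form $\begin{bmatrix}a\\ b\end{bmatrix}\begin{bmatrix}a\\ b\end{bmatrix}^\top$, with $a$ carrying powers of $A_n$ and $b$ carrying powers of $A_s$. Writing $\widetilde\Sigma_0=LL^\top$ and $\widetilde\Sigma=RR^\top$ puts the initial-state and noise terms in the same factored form.

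\emph{Step 1 (limit of $\mathcal D_T$).} The $ss$ block is
\[
\mathcal D_T=\sum_{k<T}A_s^kB_sQ_0^{-1}B_s^\top(A_s^k)^\top+\frac1\nu\Big(A_s^T(\widetilde\Sigma_0)_{ss}(A_s^T)^\top+\sum_{k<T}A_s^k\Sigma_{ss}(A_s^k)^\top\Big).
\]
Because $\rho(A_s)<1$, we have $\|A_s^k\|\le c\rho^k$ for some $\rho<1$. Hence both series converge and the initial term vanishes, which gives $\mathcal D_T\to W_c^s+\frac1\nu W_w^s$.

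\emph{Step 2 ($\mathcal A_T^{-1}\to0$); this is the main obstacle.} We have $\mathcal A_T\succeq\frac1\nu\Sigma_T$ with $\Sigma_T\triangleq\sum_{k<T}A_n^k\Sigma_{nn}(A_n^k)^\top$ and $\Sigma_{nn}\succ0$, so it suffices to show $\lambda_{\min}(\Sigma_T)\to\infty$. Uniform lower bounds on $\sigma_{\min}(A_n^k)$ fail: for a unit-modulus Jordan block, $\sigma_{\min}(A_n^k)\sim1/k$, which is square-summable. I will therefore argue pointwise.

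First, fix a unit vector $v$. Expanding $(A_n^\top)^kv$ in the Jordan basis gives terms of the form $\lambda^k p(k)$ with $|\lambda|\ge1$ and a nonzero leading polynomial coefficient. Hence $\|(A_n^\top)^kv\|\ge c(v)>0$ for all large $k$, and so $v^\top\Sigma_Tv\ge\lambda_{\min}(\Sigma_{nn})\sum_{k<T}\|(A_n^\top)^kv\|^2\to\infty$.

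Second, upgrade to $\lambda_{\min}$ using monotonicity $\Sigma_T\preceq\Sigma_{T'}$ for $T\le T'$. Suppose $\lambda_{\min}(\Sigma_T)\le C$ along a subsequence, with unit eigenvectors $v_T$, and extract $v_T\to v$. For any fixed $T_0$ and all larger $T$ in the subsequence, $v_T^\top\Sigma_{T_0}v_T\le v_T^\top\Sigma_Tv_T\le C$. Letting $T\to\infty$ gives $v^\top\Sigma_{T_0}v\le C$ for every $T_0$, which contradicts the pointwise divergence.

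\emph{Step 3 (cross terms).} Write $\mathcal A_T=\sum_j a_ja_j^\top$ and $\mathcal B_T=\sum_j a_jb_j^\top$. Here $j$ runs over the factored summands: the $T$ control terms, the $T$ noise terms (scaled by $\nu^{-1/2}$), and the single initial term. The $b_j$ satisfy $\|b_j\|\le c\rho^{k_j}$, where for the initial term the exponent is $T$. Since $a_ja_j^\top\preceq\mathcal A_T$, each $\|\mathcal A_T^{-1/2}a_j\|\le1$. For any $x$ and any cutoff $K$, this gives
\[
\|\mathcal A_T^{-1/2}\mathcal B_Tx\|\le\sum_{k_j<K}\|\mathcal A_T^{-1/2}a_j\|\,\|b_j\|\|x\|+\sum_{k_j\ge K}\|b_j\|\|x\|.
\]
The tail sum, which includes the initial term once $T\ge K$, is at most $c'\rho^K\|x\|$ uniformly in $T$. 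The head consists of finitely many fixed vectors $a_j$, which do not depend on $T$, and Step 2 gives $\|\mathcal A_T^{-1/2}a_j\|\le\|\mathcal A_T^{-1}\|^{1/2}\|a_j\|\to0$. Hence $\|\mathcal A_T^{-1/2}\mathcal B_T\|\to0$, which yields $\mathcal B_T^\top\mathcal A_T^{-1}\mathcal B_T\to0$. Finally,
\[
\|\mathcal B_T^\top\mathcal A_T^{-1}x_f^n\|\le\|\mathcal A_T^{-1/2}\mathcal B_T\|\,\|\mathcal A_T^{-1/2}\|\,\|x_f^n\|\to0,
\]
which completes the four limits in \eqref{eq: limits}.
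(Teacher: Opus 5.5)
Your proof is correct and follows the paper's route: you use $\mathcal A_T\succeq\frac{1}{\nu}M_{nn,T}$ with $\lambda_{\min}\to\infty$, convergence of the stable-mode Gramian series for $\mathcal D_T$, and a head/tail split of $\mathcal B_T$ at a cutoff $K$, where the head is killed by $\mathcal A_T^{-1}\to0$ and the tail is small uniformly in $T$. The differences are refinements rather than a new route: you actually prove $\lambda_{\min}(M_{nn,T})\to\infty$ by pointwise divergence plus a compactness argument, which the paper only asserts and which is needed because $\sigma_{\min}(A_n^k)$ can decay for unit-modulus Jordan blocks; and you bound the tail through $\|\mathcal A_T^{-1/2}a_j\|\le 1$ and the triangle inequality, where the paper uses the positive-semidefinite tail-block (Schur-complement) bound $\|\mathcal A_T^{-1/2}\mathcal R_T^{(K)}\|^2\le\|\mathcal D_{T,\mathrm{tail}}^{(K)}\|$.
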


\begin{proof}
Since $\widetilde\Sigma\succ0$ and
$|\lambda_i(A_n)|\ge1$, we have $\lambda_{\min}(M_{nn,T})\to\infty.$

Hence for $\mathcal A_T(\nu)
=
G_{nn,T}+\frac{1}{\nu}M_{nn,T}
\succeq
\frac{1}{\nu}M_{nn,T},$ we obtain $\mathcal A_T(\nu)^{-1}\to0$.
For $\rho(A_s)<1$, all stable-mode
Gramian series converge, and therefore $G_{ss,T}\to W_c^s,$ $M_{ss,T}\to W_w^s,$
Next, for fixed $K<T$,  we write $\mathcal B_T=\mathcal B_T^{(K)}+\mathcal R_T^{(K)}.$
Therefore $\|\mathcal A_T^{-1/2}\mathcal B_T\|
\le
\|\mathcal A_T^{-1/2}\mathcal B_T^{(K)}\|
+
\|\mathcal A_T^{-1/2}\mathcal R_T^{(K)}\|.$
$\mathcal B_T^{(K)}$ is bounded and
$\mathcal A_T^{-1}\to0$, hence the first term vanishes. Moreover, using 
positive semidefiniteness of the tail block, we get $\|\mathcal A_T^{-1/2}\mathcal R_T^{(K)}\|^2
\le
\|\mathcal D_{T,\mathrm{tail}}^{(K)}\|.$

Since $\rho(A_s)<1$, $\sup_{T>K}\|\mathcal D_{T,\mathrm{tail}}^{(K)}\|\to0
,(K\to\infty).$
Therefore, $\|\mathcal A_T^{-1/2}\mathcal B_T\|\to0.$ Finally, $\|\mathcal B_T^\top\mathcal A_T^{-1}x_f^n\|
\le
\|\mathcal A_T^{-1/2}\mathcal B_T\|
\|\mathcal A_T^{-1/2}x_f^n\|
\to0.$ This completes the proof.
\qed
\end{proof}




\bibliographystyle{IEEEtran}
\bibliography{references}
\end{document}